\newtheorem{thm}{Theorem}[section]
\newtheorem{dfn}[thm]{Definition}
\newtheorem{tad}[thm]{Theorem and Definition}
\newtheorem{dat}[thm]{Definition and Theorem}
\theoremstyle{definition}
\theoremstyle{remark}
\newtheorem{rem}[thm]{Remark}
\newcommand{\R}{\mathbb{R}}
\newcommand{\Z}{\mathbb{Z}}
\newcommand{\C}{\mathbb{C}}
\renewcommand{\phi}{\varphi}
\newcommand{\charact}{\operatorname{char}}
\newcommand{\determ}{\operatorname{det}}
\begin{document}

\title{Design of Ciphers based on the Geometric Structure of the M\"obius Plane}
\author{Christoph Capellaro\footnote{EY EMEIA Financial Services - Cyber Security}}

\date{}
\maketitle

 \begin{abstract}
 	Till now geometric structures don't play a major role in cryptography. Gilbert, MacWilliams and Sloane \cite{GMS74} introduced an authentication scheme in the projective plane and showed its perfectness in the sense of Shannon \cite{Sha49}. In this paper we will show that this authentication scheme also fulfills the requirement of completeness according to Kam and Davida \cite{KD79} and we will extend the application of geometric structures in cryptography by introducing an encryption scheme in the M\"obius plane. We will further examine its properties, showing that it also fulfills the requirement of completeness and Shannon's requirement of perfectness in first approximation. The results of this paper can be used to define similar encryption schemes in the circle geometries of Laguerre and Minkowski.
 \end{abstract}

\noindent{\small{\bf Keywords:} circle geometry, M\"obius, cryptography, complete, perfect, optimal key length}

\medskip

\section{Introduction}	
 	A cryptographic transformation can be understood as an incidence relation, whereby messages ${\text m}$ and ciphertexts ${\text c}$ are represented as points. The cryptographic transformation $f$ that maps ${\text m}$ to ${\text c}$ is then described by a geometric object that incises with these two points. In this paper we will design new encryption transformations with the circle geometry of the M\"obius plane and analyze their properties.
 	
 Quite often the security of cryptographic algorithms is dependent on either the fact that no efficient algorithm is known to solve a certain mathematical problem or they use mappings that show very little structure. Examples for the former are asymmetric algorithms like RSA or the discrete logarithm. The latter includes e.g. symmetric algorithms like Feistel ciphers. Little research has been done on the applicability of geometric structures in cryptography. Gilbert, MacWilliams and Sloane showed that an authentication scheme can be defined in the projective plane (\cite{GMS74} and \cite{Beu88}). In the case that this projective plane is defined over a finite field, this authentication scheme is perfect in the sense of Shannon, i.e. the uncertainty of the plaintext after observing the ciphertext is equal to the a-priori uncertainty about the plaintext \cite{Sha49}. 
 	
 With the use of other incidence relationships, in this case based upon the circle geometry of the M\"obius plane, we will demonstrate that the definition of encryption transformations is possible. The basic property of circles, being that three points incise with a circle, allow to associate one point with a message, another point with the ciphertext and still have a degree of freedom for a secret key. We will analyze encryption methods in the M\"obius plane and will be able to show that these encryption methods fulfill the criteria of completeness in the case that the geometry is defined over a field of characteristic 2. Completeness according to Kam and Davida \cite{KD79} means that, assumed that input and output of a transformation are represented as bit vectors, there is at least one input vector for which a change in the $i$-th bit results in a change of the $j$-th bit of the output vector for arbitrary $i$ and $j$.  Encryption transformations based on M\"obius are in first approximation perfect. Strategies for key selection are introduced that provide optimal key length for a perfect encryption method, meaning that the size of the key does not exceed the size of the message, as long as padding is ignored.
 
 {\it Acknowledgment.} The author would like to express great gratitude to Helmut Karzel. Without the insights in circle geometries that he shared, this paper would not have been possible. Furthermore thank goes to Mariia Denysenko, who with her inspiring attitude and due care of technical details supported the completion of this paper.
 
\section{Cryptographic Transformations and Cryptographic Schemes} 
\begin{dfn}\label{dfn:cryptographic_transformation} Let $M$ and $C$ be sets,
	where $M$ is called a set of messages and $C$ a set of ciphertexts. If there is a nonempty set of functions $F$ of the form $F:M\to C$ with the property that every $f\in F$ is reversible, then the elements $f\in F$ are called {\em cryptographic transformations} and $(M,C,F)$ is called a {\em cryptographic scheme}.
\end{dfn}

\begin{tad}\label{tad:countable_cryptographic_scheme} A cryptographic scheme $(M,C,F)$ is called a {\em countable infinite cryptographic scheme}, if the sets $M$, $C$ and $F$ are
	countable infinite. If the set of ciphertexts $C'$ of a cryptographic scheme $(M',C',F')$ is finite then	$(M',C',F')$ is called a {\em finite cryptographic scheme}.
\end{tad}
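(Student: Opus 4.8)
The assertion carried by this Theorem and Definition is that the name ``finite cryptographic scheme'' is warranted: although the hypothesis only restricts the ciphertext set $C'$, the message set $M'$ and the transformation set $F'$ are then automatically finite as well, so that a finite scheme is finite in all three of its constituents, just as a countable infinite scheme is countably infinite in all three. The plan is to obtain this from the reversibility clause of Definition~\ref{dfn:cryptographic_transformation}; the ``countable infinite'' clause itself is a pure stipulation and needs no argument.

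First I would use nonemptiness of $F'$ (Definition~\ref{dfn:cryptographic_transformation}) to fix some $f \in F'$; since $f$ is reversible it is in particular injective as a map $M' \to C'$, whence $|M'| \le |C'| < \infty$ and $M'$ is finite. Second, the set of \emph{all} maps $M' \to C'$ is then finite, of cardinality $|C'|^{|M'|}$, and $F'$ is by definition a subset of it, hence finite. Together these two steps show that a cryptographic scheme with finite $C'$ has $M'$, $C'$ and $F'$ all finite, which is exactly the content to be recorded; one may add the trivial remark that no scheme is at once finite and countable infinite, since no set is simultaneously finite and countably infinite.

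I do not anticipate a genuine obstacle. The one point that deserves a sentence of care is the precise reading of ``reversible'': read as ``injective, with inverse defined on the image'' it gives $|M'| \le |C'|$, read as ``bijective'' it gives $|M'| = |C'|$ --- either reading forces $M'$, and therefore $F'$, to be finite, so the argument is insensitive to the choice. The degenerate case $M' = \emptyset$ is harmless, since then $F'$ contains only the empty map and is still finite. Thus the proof is essentially the bookkeeping observation that, for reversible transformations, finiteness propagates from the ciphertext set to the message set and then to the set of transformations.
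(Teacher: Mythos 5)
Your proposal is correct and follows essentially the same route as the paper: the paper likewise deduces finiteness of $M'$ from the injectivity of the transformations together with $|C'|<\infty$, and then finiteness of $F'$ as a consequence. Your write-up merely makes explicit what the paper leaves implicit (the bound $|M'|\le|C'|$ and the count $|C'|^{|M'|}$ of possible maps), which is fine.
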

	
\begin{proof} 
	The finiteness of the sets $M'$ and $F'$ follow immediately from the finiteness of $C'$ and from the fact that all cryptographic transformations $f'\in F'$ are injective.
\end{proof}	

\section{Properties of Cryptographic Schemes}
\begin{dfn}\label{dfn:a-priori_and_a-posteriori_probabilities} If $(M,C,F)$ is a cryptographic 	scheme, then the probability of the occurrence of a message $m\in M$ is denoted with $\mu  (m)=P(m)$ and is called {\em a-priori probability} of the message $m$. Similarly the probability of the occurrence of the message $m$ under the condition that $m$ is mapped to $c$ by any $f\in F$ is denoted with $\nu (m,c)=P_{c=f(m)}(m)$ and is
	called {\em a-posteriori probability} of the message $m$ for a given ciphertext $c$.
\end{dfn}	
	
\begin{rem} In the case that a cryptographic system $(M,C,F)$ is finite, relative frequencies can be used to calculate the a-priori and a-posteriori probabilities. Then $\mu (m)=\frac{H(m)}{|M|}$ and $\nu (m,c)=\frac{\big|\{f\in F:c=f(m)\}\big|}{\sum_{m\in M}{\big|{f\in F:c=f(m)}\big|}}$. Here $H(m)$ means the frequency of the occurrence of the message $m$.
\end{rem}

\begin{dfn}\label{dfn:perfect_cryptographic_scheme}Let $(M,C,F)$ be a cryptographic scheme for
	which every $c\in C$ is a possible ciphertext and let $\mu$ and $\nu$ be its a-priori and a-posteriori probabilities.	Then $(M,C,F,\mu ,\nu )$ is called {\em perfect according to Shannon \cite{Sha49}} as long as $\mu (m)=\nu (m,c)$ for any $m\in M$ and for any $c\in C$. 
\end{dfn}

\begin{rem} If $(M,C,F)$ is finite and if $\mu (m)=\mu (m_0)$ for any $m\in M$ then $(M,C,F,\mu ,\nu )$ is perfect as long as $\frac{1}{|M|}=\frac{\big|\{f\in F:c=f(m)\}\big|}{\sum_{m\in M}{\big|\{f\in F:c=f(m)\}\big|}}$ for any $m\in M$ and for any $c\in C$.
\end{rem}
	
Another well-known property for cryptographic schemes is the completeness according to Kam and Davida which is defined for cryptographic schemes based on binary strings.	
	
\begin{dfn}\label{dfn:complete_cryptographic_scheme} Let $(M,C,F)$ be a cryptographic
	scheme with $M=\Z_2^r$ and $C=\Z_2^s$. Then the cryptographic transformation $f\in F$ is called {\em complete according to Kam and Davida \cite{KD79}}, if there is at least one message $m_0= m_{0_1},m_{0_2},...,m_{0_r}\in M$ for every pair of indices $i\le r$ and $j\le
	s$, where a change in the $i$-th bit of $m_0\in M$ results in a change of the $j$-th bit of $c_0=c_{0_1},c_{0_2},...,c_{0_s}=f(m_0)\in C$. A cryptographic scheme $(M,C,F)$ that consists exclusively of	complete cryptographic transformations is called a {\em complete cryptographic scheme}.
\end{dfn}	
	
\section{Authentication Schemes}
\begin{dfn}\label{dfn:authentication_scheme}
	Let $(M,C,A)$ be a cryptographic scheme. Let $\mu$ further be a	measure on $C$ that fulfills the condition that $\mu (a(C))\ll \mu (C)$ for every $a\in A$. Then $(M,C,A)$ is
	called {\em authentication scheme}, the ciphertexts $c\in C$ are called {\em authenticated messages} and the	transformations $a\in A$ are called {\em authentications}.
\end{dfn}	
	
\begin{rem}
	The authentications $a\in A$ do not necessarily need to be injective. An authenticated
	message $c\in C$ is verified with a function $\nu :C\to \{true,false\}$ with $\nu (c)= true$, if $c\in a(M)$ and $\nu (c)=false$, if $c\notin a(M)$.
\end{rem}
	
So an authenticated message $c\in C$ is accepted as authentic as soon as $c\in a(M)$. This is the motivation for the condition on the authentications $a\in A$ introduced in Definition~\ref{dfn:authentication_scheme}, which ensures that only a small
subset of authenticated messages is accepted as authentic.
	
\begin{dfn}\label{dfn:cartesian_authentication_scheme}
	An authentication scheme $(M,C,A)$ is called {\em Cartesian}, if any $a\in A$ is injective and $a_1(M)\cap a_2(M)=\varnothing$ for any $a_1,a_2\in A$ and $\bigcup_{a\in A} a(M)=C$.	
\end{dfn}

For a Cartesian authentication scheme based on countable sets $C$ and $A$ counting can be defined as the measure $\mu$ on $C$ and $A$. Then the theorem of Gilbert, MacWilliams and Sloane \cite{GMS74} can be written as follows.
	
\begin{thm}\label{thm:gilbert_mcwilliams_sloane}
	Let $n_0=\nu(A)$ be the number of authentications in a	Cartesian authentication scheme $(M,C,A)$ and let $c\in C$ be a randomly chosen authenticated message, then the following holds:
	$$\nu (\{a\in A:\exists m\in M \text{\em  and } c=a(m)\})\geq \sqrt{n_0}$$
\end{thm}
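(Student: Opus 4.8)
The plan is to recast the inequality as a statement about the incidence count between authentications and authenticated messages and then run the Gilbert--MacWilliams--Sloane deception argument. For $c\in C$ put
\[
A_c:=\{a\in A:\exists\,m\in M\ \text{with}\ c=a(m)\},
\]
so that $\nu(A_c)=|A_c|$ is the quantity to be bounded below. Since in a Cartesian scheme every $a\in A$ is injective, $|a(M)|=|M|$, and double counting the pairs $(a,c)$ with $c\in a(M)$ gives the identity $\sum_{c\in C}|A_c|=\sum_{a\in A}|a(M)|=n_0\,|M|$.

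Reading ``a randomly chosen authenticated message $c$'' as $c=a(m)$ with $a$ uniform on $A$ and $m$ uniform on $M$ (so $\Pr[c=c_0]=|A_{c_0}|/(n_0|M|)$), one obtains
\[
\mathbb{E}\bigl[|A_c|\bigr]\;=\;\frac{1}{n_0|M|}\sum_{c_0\in C}|A_{c_0}|^{2}\;\ge\;\frac{n_0|M|}{R}
\]
by Cauchy--Schwarz, where $R=|\{c\in C:A_c\neq\es\}|$ counts the honest ciphertexts that actually occur. It therefore suffices to prove $|A_c|\ge\sqrt{n_0}$ for every honest $c$: this forces $R\le\sqrt{n_0}\,|M|$, since $n_0|M|=\sum_{c}|A_c|\ge R\sqrt{n_0}$, and then $\mathbb{E}[|A_c|]\ge n_0|M|/R\ge\sqrt{n_0}$, which is the assertion.

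The remaining inequality $|A_c|\ge\sqrt{n_0}$ is the heart of the theorem and the step I expect to be the main obstacle, as it is precisely the Gilbert--MacWilliams--Sloane square-root bound. Here I would argue as in \cite{GMS74}: let $P_I=\max_c|A_c|/n_0$ be the impersonation probability and $P_S$ the best one-step substitution probability after observing an honest $c$; choosing $c$ to realise $P_I$ and then the substitute $c'\neq c$ (for a different source message) to realise $P_S$, at least one authentication is incident with both $c$ and $c'$, so $1\le|A_c\cap A_{c'}|=P_S|A_c|=P_S P_I\,n_0$, whence $P_d:=\max(P_I,P_S)\ge 1/\sqrt{n_0}$. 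Since for a Cartesian scheme the images $a(M)$ cover $C$, the extremal ciphertext is genuinely observable, so this transfers to $|A_c|\ge\sqrt{n_0}$. As a consistency check one may specialise to the projective plane over $\GF(q)$: there $A_c$ is the set of the $q+1$ points on a line, $n_0=q^2+q+1$, and indeed $(q+1)^2\ge q^2+q+1$, while the double-counting identity above is just the point/line incidence count of the plane.
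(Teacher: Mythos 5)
Your preliminary bookkeeping is fine---injectivity gives $\sum_{c\in C}|A_c|=n_0|M|$ and the Cauchy--Schwarz estimate is correct---but it does no work: you end by reducing the theorem to the pointwise claim $|A_c|\ge\sqrt{n_0}$ for every honest $c$, which is the strongest reading of the statement itself, so a randomly chosen $c$ would satisfy it directly and the whole expectation detour can be deleted. The genuine gap is precisely the step you flag as the main obstacle. The Gilbert--MacWilliams--Sloane argument you sketch proves $P_I\,P_S\ge 1/n_0$, hence $\max(P_I,P_S)\ge 1/\sqrt{n_0}$ (and note $|A_c\cap A_{c'}|=P_S|A_c|$ should be an inequality $\le$); this does \emph{not} transfer to $|A_c|\ge\sqrt{n_0}$. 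If the maximum is attained by $P_I$, you only learn that \emph{some} $c$, namely one realising $\max_c|A_c|$, has $|A_c|\ge\sqrt{n_0}$---not an arbitrary or randomly drawn honest $c$. If it is attained by $P_S$, you learn something about the ratios $|A_c\cap A_{c'}|/|A_c|$, which is compatible with $|A_c|$ being as small as $1$ (then $P_S=1$). Concretely, in a scheme whose images $a(M)$ are pairwise disjoint---which is what Definition~\ref{dfn:cartesian_authentication_scheme} literally demands---every honest $c$ determines its key uniquely, so $|A_c|=1$ for all $c$, while the square-root deception bound still holds because $P_S=1$; your closing remark that ``the extremal ciphertext is genuinely observable'' therefore cannot bridge the gap, and some uniformity hypothesis beyond what you invoke is indispensable.

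For comparison, the paper never proves the general statement: it cites \cite{BR92} for the general case and only verifies the projective-plane instance, inside the proof of~\ref{dat:perfect_authentication_scheme_in_a_finite_projective_plane}, by direct counting: $n_0=q^2$ keys (the points off $L_0$), every line $C\ne L_0$ carries exactly $q=\sqrt{n_0}$ candidate keys, and impersonation and substitution each succeed with probability $1/q$. Your consistency check is off relative to that scheme: there $A_c$ consists of the $q$ points of the line $c$ outside $L_0$ (not all $q+1$ points) and $n_0=q^2$ (not $q^2+q+1$), so the bound is attained with equality, which is exactly the sharpness that makes the scheme perfect. Thus the only part of your argument that goes beyond the paper is the general square-root inequality, and that is the part that is not actually established.
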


A proof of theorem~\ref{thm:gilbert_mcwilliams_sloane} for an authentication scheme based on projective planes which has been introduced in \cite{GMS74} is given below, a general proof can be found e.g. in \cite{BR92}.

\begin{dfn}\label{dfn:perfect_authentication_scheme}
	A Cartesian authentication scheme $(M,C,A)$ with $n_0=\nu (A)$ for which theorem~\ref{thm:gilbert_mcwilliams_sloane} is fulfilled sharply, is called {\em perfect}.
\end{dfn}
	
\section{Authentication Schemes Based on Projective Planes}	
In this section we use some well-known facts about projective planes. The corresponding proofs can be found in the literature, e.g. in \cite{KK88} or \cite{BR92}.

\begin{dfn}\label{dfn:projective_plane}
	An incidence space $(\boldsymbol{P},\boldsymbol{L})$ of points $\boldsymbol{P}$ and lines $\boldsymbol{L}$ and at least 3
	different points in $\boldsymbol{P}$ is called a {\em projective plane}, if each line consists of at least 3 points and any two lines of $\boldsymbol{L}$ intersect in exactly one point.
\end{dfn}

\begin{dfn}\label{dfn:finite_projective_plane}
	A projective plane is {\em finite}, if its set of points is finite.
\end{dfn}
	
\begin{rem}	
	All lines in a finite projective plane have the same number of points. A projective plane
	with lines that consist of $q+1$ points is called a projective plane of order $q$.
\end{rem}
	
\begin{dat}\label{dat:perfect_authentication_scheme_in_a_finite_projective_plane}
	Let $(\boldsymbol{P},\boldsymbol{L})$ be a finite projective plane and ${\text {\em L}}_0\in \boldsymbol{L}$ be a line in $(\boldsymbol{P},\boldsymbol{L})$. An authentication scheme $(M,C,A)$ can be defined in $(\boldsymbol{P},\boldsymbol{L})$ as follows: $M$ is the set of all points on ${\text {\em L}}_0$ and $C$ the set of all lines of $\boldsymbol{L}\setminus {\text {\em L}}_0$. In order to describe the set of authentications $A$ we define $K$ as the set of points in $\boldsymbol{P}\setminus {\text {\em L}}_0$. The authentications $a\in A$ are then defined as follows: $A:M\times K\to C$  and $a({\text {\em m}},{\text {\em k}})=\overline{{\text {\em m}},{\text {\em k}}}$ for given ${\text {\em m}}\in M$ and ${\text {\em k}}\in K$, where ${\text {\em m}}$ and ${\text {\em k}}$ are considered as points of the projective plane. An authenticated message ${\text {\em C}}\in C$ which has been authenticated using a given ${\text {\em k}}_0\in K$ is verified as true, if ${\text {\em k}}_0\in {\text {\em C}}$ and false otherwise. Here again ${\text {\em k}}_0$ is considered a point of the projective plane and ${\text {\em C}}$ a line in the projective plane. This authentication scheme $(M,C,A)$ is perfect according to definition~\ref{dfn:perfect_authentication_scheme}.
\end{dat}

\begin{proof}
	Let $q$ be the order of $(\boldsymbol{P},\boldsymbol{L})$. Then $(\boldsymbol{P},\boldsymbol{L})$ consists of $q^2+q+1$ points and as many lines. There are $q+1$ points on each line and $q+1$ lines run through each point in $(P,L)$. Thus $|K|=q^2$, $|M|=q+1$ and the number of possible authentications for a given message ${\text m}\in M$ is $q$.
	
	We distinguish between the two cases that no authenticated message is known and that one such authenticated message ${\text C}\in C$ is known.
	
	To analyze the first case we select a message ${\text m}$ in the set of messages $M$ and create the authenticated message ${\text C}\in C=\boldsymbol{L}\setminus {\text L}_0$. Again we consider the representations of ${\text m}$, ${\text C}$ and ${\text L}_0$ as point and lines in the projective plane. There are $q$ possible authentications, since $q$ lines different from ${\text L}_0$ run through the point ${\text m}\in {\text L}_0$. Each of these $q$ lines has $q$ different points outside ${\text L}_0$. So there are $q^2$ possibilities for the point ${\text k}_0\in K$ which represents the key ${\text k}_0$ used for the authentication. Thus the chance that ${\text C}$ is verified as a true authenticated message equals to $\frac{q}{q^2}=\frac{1}{q}$.
	
	In the second case a message ${\text m}_0$ and its corresponding authenticated message ${\text C}_0=\overline{{\text m}_0,{\text k}_0}$ are known.	Again we are looking at the representations of ${\text m}_0$, ${\text k}_0$ and ${\text C}_0$ as points and line in the projective plane. Another ${\text m}\in {\text L}_0$ with ${\text m}\neq {\text m}_0$ and ${\text C}\in \boldsymbol{L}\setminus {\text L}_0$ with ${\text C}\neq {\text C}_0$ can be selected. The lines $\overline{{\text m},{\text k}}$ and $\overline{{\text m}_0,{\text k}_0}$ intersect in the point ${\text k}\in \overline{{\text m}_0,{\text k}_0}$. The probability that ${\text k}={\text k}_0$ is $\frac{1}{q}$.
\end{proof}

Figure~\ref{fig:authentication_projective_plane} visualizes the authentication scheme of definition~\ref{dat:perfect_authentication_scheme_in_a_finite_projective_plane}.

\begin{figure}
\begin{center}
\begin{tikzpicture}
	\draw (0,0) --(6,0) node [below] {$L_0$};
	\draw (1,-1) node [below] {${\text C}:=\overline{{\text m},{\text k}}$}--(4.5,2.5);
	\draw [fill] (4,2) circle [radius=0.1] node [right] {$\hphantom {M}{\text k}$};
	\draw [fill] (2,0) circle [radius=0.1] node [below] {$\vphantom {M}{\text m}$};
\end{tikzpicture}
\caption{Authentication in the projective plane.}\label{fig:authentication_projective_plane}
\end{center}
\end{figure}
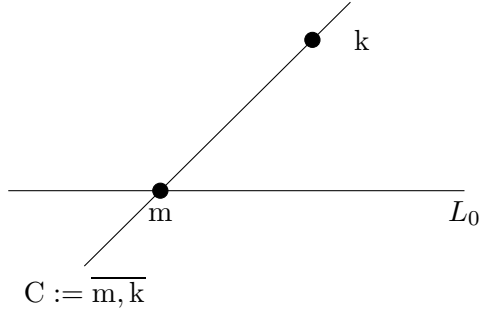
	
\begin{rem}
	The points ${\text k}\in K$ which have been used in definition~\ref{dat:perfect_authentication_scheme_in_a_finite_projective_plane} to construct the authenticated	messages are also referred to as keys.
\end{rem}

\begin{rem}
	It’s obvious that in a practical application of this perfect authentication scheme each key can be used only once.
\end{rem}
	
\begin{rem}
	Beutelspacher and Rosenbaum introduced a generalization of this perfect authentication
	scheme in the projective space \cite{BR92}.
\end{rem}
	
In this paper we continue the examination of the cryptographic properties of the authentication scheme introduced by Gilbert and MacWilliams with following theorem.

\begin{thm}
	An authentication scheme in a finite projective plane over a filed $\mathbf{F}$  with $\charact\mathbf{F}=2$ as introduced in definition~\ref{dat:perfect_authentication_scheme_in_a_finite_projective_plane} is complete in the sense of definition~\ref{dfn:complete_cryptographic_scheme}.
\end{thm}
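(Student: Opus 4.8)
Because $\mathbf{F}$ is finite with $\charact\mathbf{F}=2$ we have $\mathbf{F}=\GF(2^n)$ for some $n$ and the plane has order $q=2^n$. I would first coordinatise: choose the frame so that $L_0=\{z=0\}$, so the affine points of $L_0$ are $[1\!:\!\mu\!:\!0]$ with $\mu\in\mathbf{F}$; fixing a $\GF(2)$-basis of $\mathbf{F}$ identifies $M$ with $\Z_2^n$ (we drop the point $[0\!:\!1\!:\!0]$ and ignore padding), so $r=n$. A key is an affine point $k=(k_1,k_2)$, and I encode a line $\ell\neq L_0$ by the affine coordinates of its dual point, which up to padding identifies $C$ with $\mathbf{F}^2\cong\Z_2^{2n}$, so $s=2n$. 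Computing $\overline{[1\!:\!\mu\!:\!0],(k_1,k_2)}$ and normalising its dual point so that the first coordinate is $1$ yields
\[
  a_{k}(\mu)=\bigl(\mu^{-1},\;k_1+k_2\mu^{-1}\bigr).
\]
The inversion $\mu\mapsto\mu^{-1}$ enters because passing from a line to its normalised dual coordinates is fractional-linear; with the opposite normalisation one instead gets the $\GF(2)$-affine map $\mu\mapsto(\mu,\,k_1\mu+k_2)$, whose discrete derivatives are constant and which is \emph{not} complete. So the substance of the theorem is that the correct representation is non-affine, and $\charact\mathbf{F}=2$ is what both makes the bit-lengths exact and lets the Frobenius control the nonlinear part below.

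Fix $i\le n$ and $j\le 2n$, write $e_i$ for the $i$-th basis vector, and note that addition in $\mathbf{F}$ is bitwise XOR. Completeness for $(i,j)$ says exactly that there is a message $\mu_0$ with $\bigl(a_{k}(\mu_0+e_i)+a_{k}(\mu_0)\bigr)_j=1$, i.e.\ that the discrete-derivative set $\De_i:=\{a_{k}(\mu+e_i)+a_{k}(\mu):\mu\in M\}$ is not contained in the $j$-th coordinate hyperplane of $\Z_2^{2n}$. From the formula above,
\[
  a_{k}(\mu+e_i)+a_{k}(\mu)=\frac{e_i}{\mu(\mu+e_i)}\,(1,k_2),\qquad
  \mu(\mu+e_i)=e_i^{2}\,\wp\!\bigl(\mu e_i^{-1}\bigr),
\]
where $\wp(t)=t^{2}+t$ is the Artin--Schreier map, whose image is the hyperplane $\ker\tr_{\GF(2^n)/\GF(2)}$. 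Hence, as $\mu$ runs over the admissible messages, $\tfrac{e_i}{\mu(\mu+e_i)}$ runs over $e_i^{-1}\{t^{-1}:\tr(t)=0,\ t\neq0\}$, so $\De_i$ is the image of that set under $w\mapsto(w,k_2w)$.

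The crux is the purely field-theoretic statement that $\{t^{-1}:t\in\GF(2^n),\ \tr(t)=0,\ t\neq0\}$ spans $\GF(2^n)$ over $\GF(2)$. I would prove it with character sums: if this set lay in a hyperplane $\{x:\tr(\lambda x)=0\}$ with $\lambda\neq0$, then $\#\{t\neq0:\tr(t)=\tr(\lambda t^{-1})=0\}=2^{n-1}-1$; expanding over additive characters rewrites this count as $\tfrac14(q-3+K)$ with $K=\sum_{t\neq0}(-1)^{\tr(t+\lambda t^{-1})}$ a Kloosterman sum, forcing $K=q-1$ and contradicting Weil's bound $|K|\le 2\sqrt q$ for all $q=2^n\ge 8$; the remaining planes $q\in\{2,4\}$ are inspected directly (and are where the statement, hence the theorem, may require a separate small-case treatment). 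Granting this, $\De_i$ meets the complement of every coordinate hyperplane as long as $k_2\neq0$, and the keys with $k_2=0$ are treated by normalising the dual point on a different coordinate---a case distinction that also soaks up the one discarded message; choosing $\mu_0$ with $\bigl(a_{k}(\mu_0+e_i)+a_{k}(\mu_0)\bigr)_j=1$ then disposes of the arbitrary pair $(i,j)$.

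The one genuine obstacle I anticipate is the spanning statement for the inverses of the trace-zero hyperplane: it is the single non-formal input and the only spot where characteristic $2$ is used for more than making the bit-lengths fit. Everything else---fixing the frame, extracting $a_{k}$, and the reduction of completeness to ``$\De_i$ avoids a coordinate hyperplane''---is routine, aside from bookkeeping about padding and the keys where the chosen dual-coordinate chart degenerates.
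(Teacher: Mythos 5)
Your proposal is sound in its core computation, but it takes a genuinely different --- and much heavier --- route than the paper. The paper parametrises the authenticated messages by the single slope parameter $u$ of the line through the key (so $r=s=n$), derives the fractional-linear map $u=a_{\text k}(x_1)=\frac{sx_1+t-k_2}{x_1-k_1}$, and settles completeness in one line: the two incidence equations for $(x_1,u)$ and $(x_1+e_i,u+e_j)$ can be satisfied by choosing the parameters $s,t$ of the message line, i.e.\ the paper spends the freedom in the coordinatisation of ${\text L}_0$ rather than proving any field-theoretic fact. You instead fix the scheme, encode a ciphertext line by both affine dual coordinates (so $s=2n$), reduce completeness to ``the discrete-derivative set lies in no coordinate hyperplane'', and supply the one nontrivial ingredient: that $\{t^{-1}:\tr(t)=0,\ t\neq 0\}$ spans $\GF(2^n)$ over $\GF(2)$, proved via Kloosterman sums and the Weil bound for $q\ge 8$. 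Your lemma is in fact exactly what would be needed to make the conclusion hold for a \emph{fixed} line ${\text L}_0$ and fixed key in the paper's own slope encoding, since there the derivative equals $(sk_1+t+k_2)\,e_i\big/\bigl((x_1+k_1)(x_1+e_i+k_1)\bigr)$ with $sk_1+t+k_2\neq 0$; so your route buys a stronger, fixed-scheme statement (and makes explicit the representation-sensitivity that your affine-normalisation example exposes), while the paper's argument is elementary but leans on re-choosing $s,t$.

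Two points you flag yourself are genuine limitations of your route rather than bookkeeping. First, for keys with $k_2=0$ your chart makes the second dual coordinate of every authenticated line constant (equal to $k_1$), so no output bit in positions $n+1,\dots,2n$ ever changes; ``renormalising on a different coordinate'' amounts to choosing the bit-encoding of $C$ per key, which Definition~\ref{dfn:complete_cryptographic_scheme} does not permit once $C=\Z_2^s$ is fixed --- this case needs an honest, key-independent fix (or a retreat to the slope encoding, where the problem disappears). Second, the spanning lemma is actually false for $q\in\{2,4\}$: the set is empty for $q=2$ and equals $\{1\}$ for $q=4$, and for $q=4$ one checks that completeness in your $2n$-bit formulation fails for every choice of basis and key, since it would force $e_1^{-1}$ and $e_2^{-1}$ both to have all bits equal to $1$. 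So the small planes are not settled ``by inspection'' but must be excluded from your statement, whereas they never arise as an obstruction in the paper's $n$-bit encoding.
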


\begin{proof}
	We consider an authentication scheme $(M,C,A)$ in the projective plane $(\boldsymbol{P},\boldsymbol{L})$ over the field $\Z_2^n$ and we introduce homogeneous coordinates ${\text x}=(\Z_2^n)^*\left (\begin{array}{c}{\text x}_1 \\ {\text x}_2 \\ {\text x}_3 \end{array} \right )$ for the points in $(\boldsymbol{P},\boldsymbol{L})$. Here and in the following the field $\mathbf{F}^*$ shall denote the field $\mathbf{F}$ without its zero element. We assume that the set of messages $M$ and the key ${\text k}$ are in the finite and that ${\text m}_1\neq {\text k}_1$. Otherwise the projective plane shall be transformed accordingly. 
	
	Since $M$ is in the finite, its points can be represented as follows:
	
	\begin{equation}\label{messages}
M=\big \{{\text x}\in \boldsymbol{P}:s{\text x}_1-{\text x}_2+t=0,s,t\in \Z_2^n,{\text x}_3=1 \big \}
	\end{equation}
	
There are $q=2^n$ points of $M$ in the finite, which are uniquely determined by the coordinate ${\text x}_1$ in~(\ref{messages}). 

Every line of $C$ passes the point ${\text k}$, the key. Since ${\text k}$ is in the finite, it can be represented with ${\text k}=({\text k}_1,{\text k}_2,1)$. Then the lines of $C$ are represented by following condition:

\begin{equation}\label{authenticated_messages}
\begin{split}
u{\text x}_1-{\text x}_2+{\text k}_2-u{\text k}_1=0 &\Leftrightarrow \\
u({\text x}_1-{\text k}_1)-{\text x}_2+{\text k}_2=0&,u\in \Z_2^n,{\text x}_3=1
\end{split}
\end{equation}

The $q=2^n$ lines of $C$ are in the finite. They can be uniquely identified with the parameter $u$ in~(\ref{authenticated_messages}). 

The authenticated message ${\text C}\in C$ is determined as that line in $C$ that passes the point ${\text m}$ which represents the message. Thus we obtain following condition for ${\text C}$:

\begin{equation}\label{authenticated_message}
u({\text x}_1-{\text k}_1)-s{\text x}_1-t+{\text k}_2=0
\end{equation}

Hence, we can refer to the authentication $a:(M,K)\to C$ as a function $a_{\text k}:\Z_2^n\to \Z_2^n$ that maps the coordinate ${\text x}_1\in \Z_2^n$ to the parameter $u\in \Z_2^n$:

\begin{equation*}
u=a_{\text k}({\text x}_1):=\frac{s{\text x}_1+t-{\text k}_2}{{\text x}_1-{\text k}_1}
\end{equation*} 

We want to look at the coordinates and parameters of $(\boldsymbol{P},\boldsymbol{L})$ as binary vectors in $\Z_2^n$. Furthermore $i,j$ shall be indices with $i,j\in \{1,2,...,n\}$ and $e_i,e_j$ the corresponding unit vectors in $\Z_2^n$. 

In order to show the completeness of the authentication $a$, we need to find a point ${\text m}'$ in a way that $a({\text m}',{\text k})={\text C}'$ when $a({\text m},{\text k})={\text C}$. Here ${\text m}'$ is obtained when ${\text x}_1$ is replaced by ${\text x}_1+e_i$ and ${\text C}'$ is obtained when $u$ is replaced by $u+e_j$ in~(\ref{authenticated_message}):

\begin{equation}\label{shifted_authenticated_message}
(u+e_j)({\text x}_1+e_i-{\text k}_1)-s({\text x}_1+e_i)-t+{\text k}_2=0
\end{equation}

Parameters $s,t\in \Z_2^n$ can be found to fulfill equations~(\ref{authenticated_message}) and~(\ref{shifted_authenticated_message}).

\end{proof}

\section{Brief Introduction of the Circle Geometry of the M\"obius Plane}
\begin{dfn}\label{dfn:moebius_plane}
	An incidence structure $(\boldsymbol{S},\boldsymbol{X})$ with a set of points $\boldsymbol{S}$ and a set of circles $\boldsymbol{X}$ is called {\em M\"obius plane}, if it satisfies following properties:
\begin{itemize}
	\item[{\em (M1)}] $\forall {\text {\em a}},{\text {\em b}},{\text {\em c}}\in \boldsymbol{S}$ with ${\text {\em a}}\ne {\text {\em b}}\ne {\text {\em c}}\ne {\text {\em a}}\  {\exists}_1 {\text {\em A}}\in \boldsymbol{X}$. Following writing convention will be used in this context: $({\text {\em a}},{\text {\em b}},{\text {\em c}})^\chi :={\text {\em A}}$.
    \item[{\em (M2)}] Touch axiom: $\forall {\text {\em A}}\in \boldsymbol{X}, \forall {\text {\em a}},{\text {\em b}}\in \boldsymbol{S}$ with ${\text {\em a}}\in {\text {\em A}}$ and ${\text {\em b}}\in \boldsymbol{S}\setminus {\text {\em A}}\  {\exists}_1{\text {\em B}}\in \boldsymbol{X}: {\text {\em a}},{\text {\em b}}\in {\text {\em B}}$ and ${\text {\em A}}\cap {\text {\em B}}=\{{\text {\em a}}\}$.
    \item[{\em (M3)}] $|\boldsymbol{S}|\geq 4$ and there are four points in $\boldsymbol{S}$ which are different from each other and which do not coincide with a common circle. 
\end{itemize}
\end{dfn}
	
	By a stereographic projection of the spherical surface on the Euclidean plane, the M\"obius geometry of the spherical surface is mapped on the geometry of the circles and lines of the Euclidean plane. If one extends the Euclidean plane with a distant point, then the straight lines can be regarded as circles through this point. Figure~\ref{fig:moebius_plane} illustrates this procedure.
	
	\begin{figure}
		\begin{center}
			\begin{tikzpicture}
			\draw (4.5,5) circle [radius=2];
			\draw (4.5,5) ellipse(2cm and 0.8cm);
	\draw (0,0) --(9,1);
	\draw (9,1) --(8,8);
	\draw (8,8) --(1,7);
	\draw (1,7) --(0,0);
	\draw (4.5,6.8) node [below right] {${\text n}$}--(3.3,1.8) node [below right] {${\text p'}$};
	\draw [fill] (4.5,6.8) circle [radius=0.1]; 
	\draw [fill] (4.5,3.2) circle [radius=0.1]; 
	\draw [fill] (3.3,1.8) circle [radius=0.1]; 
	\draw (4.5,6.8) to [out=210,in=220] (4.5,3.2);
	\draw (2.7,1.1) --(6.9,6);
	\draw [fill] (3.72,3.6) circle [radius=0.1]; 
			\end{tikzpicture}	
				\caption{M\"obius plane.}\label{fig:moebius_plane}
		\end{center}
	\end{figure}
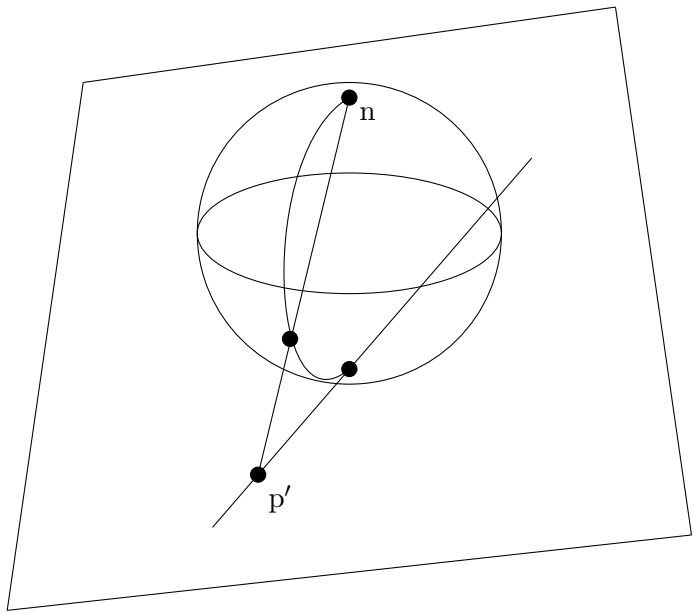
	
In the context of this paper we will only consider M\"obius planes that are defined over separable quadratic field extensions $(\mathbf{G},\mathbf{F})$ of finite fields $\mathbf{F}=GF(q)$. In a quadratic field extension of this kind there is exactly one involutorial automorphism $\overline{.\vphantom{G}}:\mathbf{G}\to \mathbf{G}$ that has the elements of $\mathbf{F}$ as fixpoints. We obtain the M\"obius plane over a separable quadratic field extension $(\mathbf{G},\mathbf{F})$, if we close it with the point $\infty$, $\boldsymbol{S}:=\overline{\mathbf{G}}:=\mathbf{G}\cup \{\infty\}$. M\"obius planes of this form do always have an analytical representation.

\begin{dfn}\label{dfn:derivation_of_a
_moebius_plane}
	The {\em derivation} of a M\"obius plane $(\boldsymbol{S},\boldsymbol{X})$ in a point ${\text {\em a}}\in \boldsymbol{S}$ is characterized with the pair $(\boldsymbol{S}^{\text {\em a}},\boldsymbol{X}^{\text {\em a}})$ with $(\boldsymbol{S}^{\text {\em a}}:=\boldsymbol{S}\setminus \{{\text {\em a}}\}$ and $\boldsymbol{X}^{\text {\em a}}:=\{{\text {\em C}}\setminus \{{\text {\em a}}\}:{\text {\em C}}\in \boldsymbol{X},{\text {\em a}}\in {\text {\em C}}\}$.
\end{dfn}

Every derivation of a M\"obius plane is an affine plane, as has been shown e.g. in \cite{KK88}. In the following we will show three different ways for an analytical representation of a M\"obius plane.

\section{Describing the M\"obius Plane Using Equations}
The Euclidean plane $\R^2$ can be represented by the plane of complex numbers $\C$, i.e. the analytical geometry of the pair $(\C,\R)$. A circle with center ${\text c}\in {\C}$ that runs through the point ${\text c}+{\text a},{\text a}\in \C^*$ is defined in $\C$ with following equation.

\begin{equation}\label{circle}
({\text z}-{\text c})\overline{({\text z}-{\text c})}={\text a}\cdot \overline{{\text a}\vphantom{()}}
\end{equation}

The circle has the radius $r=|{\text a}|$.

This model of the Euclidean geometry can be generalized by using a separable quadratic field extension $(\mathbf{L},\mathbf{K})$ instead of the pair $(\C,\R)$. Any such field extension has exactly one automorphism $\overline{.\vphantom{L}}:\mathbf{L}\to \mathbf{L}$ that keeps the elements of $\mathbf{K}$ unchanged. The automorphism $\overline{.\vphantom{L}}$ is involutory. The circles of this generalized model are presented with equation~(\ref{circle}). To obtain the M\"obius plane of the field pair $(\mathbf{L},\mathbf{K})$, $\mathbf{L}$ needs to be closed with a distant point $\infty$, $\overline{\mathbf{L}}:=\mathbf{L}\cup \{\infty\}$ and $\overline{\mathbf{K}}:=\mathbf{K}\cup \{\infty\}$.

\section{Describing the M\"obius Plane by the Use of Double Ratios}

It is known from the elementary function theory that circles in the Gaussian number plane can be described using double ratios. This can be generalized using a separable quadratic field extension $(\mathbf{L},\mathbf{K})$. The circle through the points ${\text a},{\text b},{\text c} \in \mathbf{L}$ consists of the points ${\text z}$ for which the double ratio

\begin{equation}
\text{ Dr}({\text a},{\text b},{\text c},{\text z}):=\nicefrac{\frac{{\text a}-{\text c}}{{\text a}-{\text z}}}{\frac{{\text b}-{\text c}}{{\text b}-{\text z}}}
\end{equation}

is a value in $\mathbf{K}$. The coresponding M\"obius plane is defined by adding the point $\infty$, i.e. $\boldsymbol{S}:=\overline{\mathbf{L}}:=\mathbf{L}\cup \{\infty\}$.

\section{Describing the M\"obius Plane using fractional linear functions}
In the M\"obius plane $(\boldsymbol{S},\boldsymbol{X})$ over a separable quadratic field extension $(\mathbf{G},\mathbf{F})$  fractional linear functions of the following form can be chosen to describe circles:
\begin{equation}
\gamma:
	\begin{cases}
		\overline{\mathbf{F}}\to \boldsymbol{X}\\
		z\mapsto \frac{az+b}{cz+d},&ad-bc\in \mathbf{G}^*, z\neq \infty, z\neq -\frac{d}{c} \mbox{, if } \frac{d}{c}\in \mathbf{F}\\
		\infty \mapsto \frac{a}{c}\\
		-\frac{d}{c}\mapsto \infty, &\mbox{if }\frac{d}{c}\in \mathbf{F}\\
	\end{cases}
\end{equation}

\section{Combinatorial Aspects of the M\"obius Plane}
In the following we will focus on M\"obius planes based on finite fields. Let $(\boldsymbol{S},\boldsymbol{X})$ be a M\"obius plane over the field $\mathbf{F}$ with $|\mathbf{F}|=:q$. Since any derivation of the M\"obius plane in a point results in an affine plane, where every line has $q$ points, all circles in the M\"obius plane have the same number of $k:=q+1$ points. By calling $\nu$ the number of points in $\boldsymbol{S}$ and $b$ the number of circles in $\boldsymbol{X}$, we find the following relationships by counting coincidences. From~(M1) we know:

\begin{equation}\label{equ:number_of_circles}
	\left(\begin{array}{r}
	\nu\\
	3\\
	\end{array}\right)=b\cdot
	\left(\begin{array}{r}
	k\\
	3\\
	\end{array}\right) 
\end{equation}

Let $x$ be the number of circles that run through two different points of $(\boldsymbol{S},\boldsymbol{X})$, because of~(M1)  there is:

\begin{equation}\label{equ:circles_through_two_points}
	x(q-1)=\nu -2
\end{equation}

Similarly it can be shown that as many circles coincide with two different points ${\text a}$ and ${\text b}$ in $(\boldsymbol{S},\boldsymbol{X})$, as there are points on one circle. Therefore let ${\text C},{\text D}\in \boldsymbol{X}$ and ${\text a},{\text b}\in {\text C}, {\text C}\cap {\text D}=\{{\text a}\}$.

	\begin{figure}
	\begin{center}
		\begin{tikzpicture}
		\draw [gray] (5,5) circle [radius=2];
		\draw (3,4.5) circle [radius=1];
		\draw (4.45,2.5) circle [radius=1.49];
		\draw [fill] (3.52,3.63) circle [radius=0.1] node [below] {\phantom{I}${\text a}$}; 
		\draw [fill] (5.76,3.15) circle [radius=0.1] node [below right] {\phantom{I}${\text b}$}; 
		\draw (5.9,2) node [right] {${\text C}$}; 
		\draw (1.5,4) node [right] {${\text D}$}; 
		
		\end{tikzpicture}	
		\caption{Circles through two points of the M\"obius plane.}\label{fig:circles_through_two_points}
	\end{center}
\end{figure}
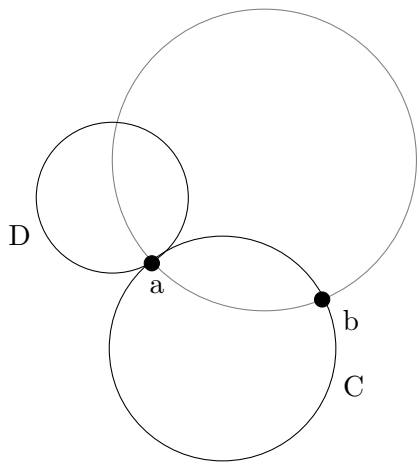

Figure~\ref{fig:circles_through_two_points} shows that any circle through ${\text a}$ and ${\text b}$, except the circle ${\text C}$ itself, intersects the circle ${\text D}$ in a point different from ${\text a}$. In connection with~(M1) and~(M2) this proves our assertion. Furthermore we can conclude from equation~(\ref{equ:circles_through_two_points}) that the number of points in the M\"obius plane is $\nu = q^2+1$. By substituting the relations for $\nu$ and $k$ we get from equation~(\ref{equ:number_of_circles}) that the number of circles in the M\"obius plane $b=q(q^2+1)$. Since the derivation of a M\"obius plane in any point ${\text a}\in \boldsymbol{S}$ results in an affine plane over the field $\mathbf{F}$ and since the affine plane has $q^2+q$ lines, as much circles of the M\"obius plane coincide with the point ${\text a}$. Furthermore there are $q$ circles different from the circle ${\text C}$ that touch ${\text C}$ in the point ${\text b}\in {\text C}$. Figure~\ref{fig:touching_circles} shows that any circle touching ${\text C}$ in the point ${\text b}$ intersects with the circle ${\text D}$ in exactly one point. Since circle ${\text D}$ has $q+1$ points, the assertion is correct.

	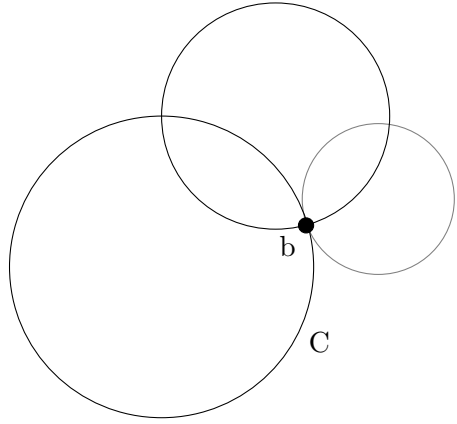
\begin{figure}
	\begin{center}
		\begin{tikzpicture}
		\draw [gray] (4.85,2.9) circle [radius=1];
		\draw (2,2) circle [radius=2];
		\draw (3.5,4) circle [radius=1.5];
		\draw [fill] (3.9,2.55) circle [radius=0.1] node [below left] {\phantom{I}${\text b}$}; 
		\draw (3.8,1) node [right] {${\text C}$}; 
		
		\end{tikzpicture}	
		\caption{Touching circles through a point of the M\"obius plane.}\label{fig:touching_circles}
	\end{center}
\end{figure}

\section{Encryption in the M\"obius Plane}
In the following we define a cipher system in a M\"obius plane $(\boldsymbol{S},\boldsymbol{X})$ over a finite field $\mathbf{F}$ with $|\mathbf{F}|=:q$. For this purpose we define

\begin{equation}
	\left( \begin{array}{r}
		\boldsymbol{S}\\
		3\\
	\end{array}\right):=\big \{({\text s}_1,{\text s}_2,{\text s}_3)\in \boldsymbol{S}:\big|\{{\text s}_1,{\text s}_2,{\text s}_3\}\big|=3\big \}
\end{equation}

as the set of triples in $\boldsymbol{S}$ that consist of three different elements.

\begin{dfn}\label{dfn:moebius_cipher}
	A cipher system $(M,C,F)$ is defined in the M\"obius plane $(\boldsymbol{S},\boldsymbol{X})$ as follows:

\begin{itemize}
	\item [{\em Messages}] $M:=\left(\begin{array}{r}
	\boldsymbol{S}\\
	3\\
	\end{array}	\right)$

\item[{\em Cipher texts}] $C:=\left(\begin{array}{r}
\boldsymbol{S}\\
3\\
\end{array}	\right)$

\item[{\em Encryption functions}]
$F:M\to C$ such that, if ${\text {\em M}}:=({\text {\em m}}_1,{\text {\em m}}_2,{\text {\em m}}_3)^\chi$ is the circle induced by the message $({\text {\em m}}_1,{\text {\em m}}_2,{\text {\em m}}_3)$, then encryption functions $f\in F$ are defined as $f:{\text {\em M}}\to {\text {\em M}}$ in the following way: 

\begin{itemize}
	\item [{\em (MC1)}]$f$ is associated with the triple $({\text {\em K}}_1,{\text {\em K}}_2,{\text {\em K}}_3)\in 
	\left(\begin{array}{r}
	\boldsymbol{X}\\
	3\\
	\end{array}	\right)$, which is called the {\em key}.
	\item[{\em (MC2)}]${\text {\em m}}_i\in {\text {\em K}}_i$ and $({\text {\em m}}_1,{\text {\em m}}_2,{\text {\em m}}_3)^\chi \cap {\text {\em K}}_i  \cap {\text {\em K}}_j=\varnothing$ for $i,j\in \{1,2,3\}$ and $i\neq j$.
\end{itemize}

\item[{\em Encryption}]
For a given message $({\text {\em m}}_1,{\text {\em m}}_2,{\text {\em m}}_3)$, an encryption function $f\in F$ and an associated key $({\text {\em K}}_1,{\text {\em K}}_2,{\text {\em K}}_3)\in 
\left(\begin{array}{r}
\boldsymbol{X}\\
3\\
\end{array}	\right)$ the ciphertext $({\text {\em c}}_1,{\text {\em c}}_2,{\text {\em }}_3):=f({\text {\em m}}_1,{\text {\em m}}_2,{\text {\em m}}_3)$ is determined by

\begin{itemize}
\item [{\em (MC3)}]
	${\text {\em c}}_i:=
	\begin{cases}
		{\text {\em K}}_i\cap ({\text {\em m}}_1,{\text {\em m}}_2,{\text {\em m}}_3)^\chi - \{{\text {\em m}}_i\} & \mbox {if } \big|{\text {\em K}}_i\cap ({\text {\em m}}_1,{\text  {\em m}}_2,{\text {\em m}}_3)^\chi\big|=2\\
		{\text {\em m}}_i & \mbox {if } \big|{\text {\em K}}_i\cap ({\text {\em m}}_1,{\text {\em m}}_2,{\text {\em m}}_3)^\chi\big|=1\\
	\end{cases}$
\end{itemize}

\item[{\em Decryption}]
Applying the encryption function $f$ to the cipher text $({\text {\em c}}_1,{\text {\em c}}_2,{\text {\em c}}_3)\in C$ yields the message.
\end{itemize}

The cipher system $(M,C,F)$ is called {\em M\"obius Cipher}.
 
\end{dfn}

Figure~\ref{fig:moebius_cipher} illustrates the encryption process of the M\"obius Cipher.

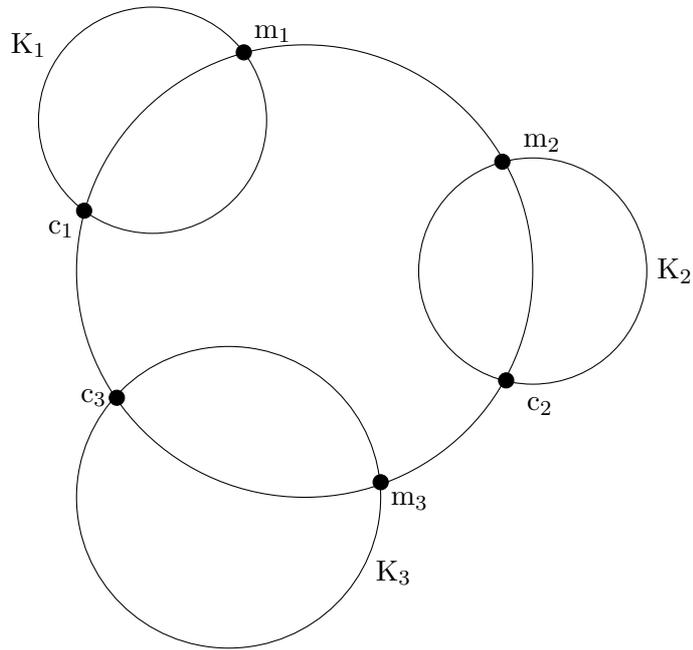
\begin{figure}
	\begin{center}
		\begin{tikzpicture}
		\draw (4,4) circle [radius=3];
		\draw (2,6) circle [radius=1.5];
		\draw (7,4) circle [radius=1.5];
		\draw (3,1) circle [radius=2];

		\draw [fill] (3.2,6.9) circle [radius=0.1] node [above right] {${\text m}_1$}; 
		\draw [fill] (1.1,4.8) circle [radius=0.1] node [below left] {${\text c}_1$}; 
		\draw [fill] (1.53,2.32) circle [radius=0.1] node [ left] {${\text c}_3$}; 
		\draw [fill] (5,1.2) circle [radius=0.1] node [below right] {${\text m}_3$}; 
		\draw [fill] (6.65,2.55) circle [radius=0.1] node [below right] {\phantom{I}${\text c}_2$}; 
		\draw [fill] (6.6,5.45) circle [radius=0.1] node [above right] {\phantom{I}${\text m}_2$}; 
		
		\draw (4.8,0) node [right] {${\text K}_3$}; 
		\draw (8.5,4) node [right] {${\text K}_2$}; 
		\draw (0,7) node [right] {${\text K}_1$}; 
		
		\end{tikzpicture}	
		\caption{Visualization of the M\"obius cipher.}\label{fig:moebius_cipher}
	\end{center}
\end{figure}

\begin{rem}
	Due to the prerequisite that the points ${\text m}_1,{\text m}_2,{\text m}_3$ of the message are different from each other in pairs, it is ensured that the circle through these three points is well defined. This allows the chosen construction of the encryption function.
\end{rem}

\begin{rem}
	The way, how the circles of the key have been chosen, ensures that the three points of the cipher text ${\text c}_1,{\text c}_2,{\text c}_3$ are also different from each other in pairs and that $({\text c}_1,{\text c}_2,{\text c}_3)^\chi = ({\text m}_1,{\text m}_2,{\text m}_3)^\chi$.
\end{rem}

\begin{rem}
	Definition~\ref{dfn:moebius_cipher} doesn't describe a constructive way to find suitable circles to determine the key. Some exemplary strategies to get such key circles are explained in \cite{Cap}.
\end{rem}
	
\begin{rem}
	In case the message is shorter than three points in the M\"obius plane, one or two points have to be added to make the M\"obius cipher applicable. This mechanism is called "padding". Both sender and receiver of encrypted messages need to be aware about the chosen padding mechanism.
\end{rem}

\begin{rem}
	If the message consists of more than three points in the M\"obius plane, the encryption with the M\"obius cipher is repeated with the next three consecutive points until the end of the message is reached. Again some padding might be required, if there are less than three points remaining at the end of the message. In order to achieve cryptographic security, for every new triple of points from the message three new circles have to be chosen as key. 
\end{rem}

\section{Cryptoanalysis of the M\"obius Cipher}
Quite often the security of cryptographic algorithms is dependent on either the fact that no efficient algorithm is known to solve a certain mathematical problem or they use mappings that show very little structure. Examples for the former are asymmetric algorithms like RSA or the discrete logarithm. The latter includes e.g. symmetric algorithms like Feistel ciphers. In both cases the security of these ciphers is based on assumptions and experience, but cannot be concluded strictly. The advantage of the M\"obius cipher that has been introduced in this paper is that it is based on well defined geometric structures and hence its cryptographic strength can be derived from known properties.

Before coming to the perfectness and completeness as introduced in definitions~\ref{dfn:perfect_cryptographic_scheme} and~\ref{dfn:complete_cryptographic_scheme}, we would like to discuss some other requirements for a cipher system. An important one is that the number of potential keys needs to be large enough to avoid successful brute force attacks. Typically $2^{128}$ is nowadays considered to be a sufficient size for the number of possible keys. The number field $\mathbf{F}$ over which the M\"obius plane $(\boldsymbol{S},\boldsymbol{X})$ has been defined needs to be large enough, to fulfill this requirement. Another useful property of a cipher system is the ratio between input and output length. If ciphertexts are longer than the original messages, the cipher system is called {\em expanding}. Again a suitable choice of the underlying number field $\mathbf{F}$ can reduce expansion. In the case that ASCII characters, which have a length of 8 bit, shall be encrypted, expansion can be kept low, if the number field $q:=|\mathbf{F}|$ is chosen in a way that $q$ is only slightly bigger than $2^{k\cdot 8}$ and $k \geq 16$. This choice would also fulfill the requirement mentioned above of providing a sufficient large key space.

To examine the properties of perfectness and completeness we use a simple strategy to select the necessary keys for the M\"obius cipher $(M,C,F)$ by choosing lines. In the M\"obius plane $(\boldsymbol{S},\boldsymbol{X})$ lines are represented by circles ${\text C}\in C$ that run through the point $\infty$. This gives us following quantities for our M\"obius cipher $(M,C,F)$ in the M\"obius plane $(\boldsymbol{S},\boldsymbol{X})$ over the field $\mathbf{F}$ with $|\mathbf{F}|=:q$. The messages ${\text m}_1,{\text m}_2,{\text m}_3$ are three points in $(\boldsymbol{S}\setminus \{\infty\})=:\overset{\circ}{\boldsymbol{S}}$, different in pairs. The key circles ${\text K}_1,{\text K}_2,{\text K}_3$ are determined by the three points ${\text k}_i\in \boldsymbol{S}\setminus \{({\text m}_1,{\text m}_2,{\text m}_3)^\chi,\infty\}$, $i\in \{1,2,3\}$. The following figure~\ref{fig:encryption_with_lines} illustrates the encryption function.

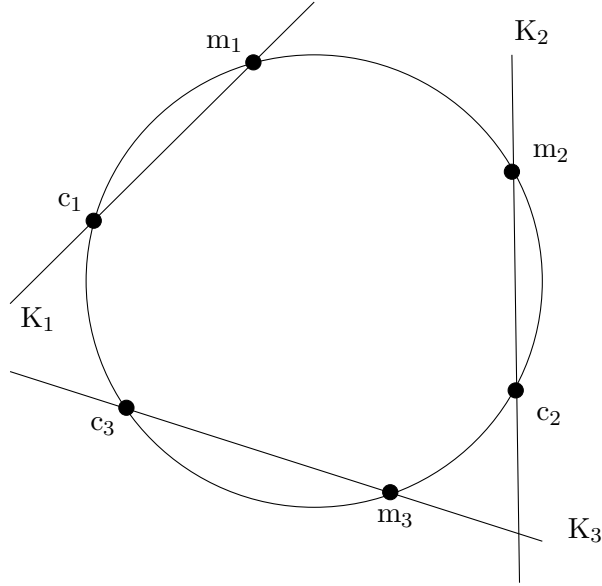
\begin{figure}
	\begin{center}
		\begin{tikzpicture}
		\draw (4,4) circle [radius=3];
		\draw (0,3.7) --(4,7.7);
		\draw (0,2.8) --(7,0.55);
		\draw (6.6,7) --(6.7,0);
		
		\draw [fill] (3.2,6.9) circle [radius=0.1] node [above left] {${\text m}_1$}; 
		\draw [fill] (1.1,4.8) circle [radius=0.1] node [above left] {${\text c}_1$}; 
		\draw [fill] (1.53,2.32) circle [radius=0.1] node [below left] {${\text c}_3$}; 
		\draw [fill] (5,1.2) circle [radius=0.1] node [below] {\phantom{I}${\text m}_3$}; 
		\draw [fill] (6.65,2.55) circle [radius=0.1] node [below right] {\phantom{I}${\text c}_2$}; 
		\draw [fill] (6.6,5.45) circle [radius=0.1] node [above right] {\phantom{I}${\text m}_2$}; 
		
		\draw (7.2,0.7) node [right] {${\text K}_3$}; 
		\draw (6.5,7.3) node [right] {${\text K}_2$}; 
		\draw (0,3.5) node [right] {${\text K}_1$}; 
		
		\end{tikzpicture}	
		\caption{M\"obius cipher using lines as keys.}\label{fig:encryption_with_lines}
	\end{center}
\end{figure}

The encryption of three points of the message requires three points as keys. Hence the length of the key equals the length of the message. Since according to Shannon the key length in a perfect encryption scheme is at least as big as the length of the ciphertext, the key length of our M\"obius cipher is minimal.

To examine the property of perfectness according to definition~\ref{dfn:perfect_cryptographic_scheme} we want to assume that all messages in $(M,C,F)$ are equally distributed. Let ${\text m}_1\neq{\text m}_2\neq{\text m}_3\neq{\text m}_1$ be three points of the message lying on the circle ${\text M}$ of the M\"obius plane and let ${\text c}_1,{\text c}_2,{\text c}_3$ be the corresponding points of the cipher text, which also incide with the circle ${\text M}$. Furthermore the probability measures $\mu$ and $\nu$ shall be defined on ${\text M}$ as explained in definition~\ref{dfn:perfect_cryptographic_scheme}. In the following we want to determine the a-priori and a-posteriori probabilities of the three points ${\text m}_1,{\text m}_2,{\text m}_3$ of the message independently. 

Lets have a look at the encryption of ${\text m}_1$. ${\text M}:=({\text m}_1,{\text m}_2,{\text m_3})^\chi$ is the circle determined by the message points. Since there are $q+1$ points on ${\text M}$, the probability of the occurrence of the message point ${\text m}_1$ is $\mu ({\text m}_1)=\frac{1}{q+1}$.

The number of possible keys that can be used to encrypt ${\text m}_1$ is given by $K_1:=\overset{\circ}{\boldsymbol{S}}\setminus{{\text M}}$, hence $|K_1|=q^2-q-1$.

The pair of points ${\text m}_1,{\text c}_1$ defines a line in $(\overset{\circ}{\boldsymbol{S}},{\boldsymbol{X}})$. If ${\text m}_1={\text c}_1$, we choose the tangent to the circle ${\text M}$ through the point ${\text m}_1$. Two different cases are examined to determine the a-posteriori probabilities: 

\begin{itemize}
	\item [${\text m}_1={\text c}_1$:] Every point of the tangent to the circle ${\text M}$ through the point ${\text c}_1$ except the points ${\text c}_1$ and $\infty$ are possible key points. Hence in this case $\big| \{f\in F: ({\text m}_1,{\text c}_1)\subset f\}\big|=q-1$.
	\item [${\text m}_1\neq {\text c}_1$:]For a given pair of message and ciphertext points ${\text m}_1,{\text c}_1$ the key can be any point on the line defined by ${\text m}_1,{\text c}_1$, with the exception of the points ${\text m}_1,{\text c}_1,\infty$. Hence $\big| \{f\in F: ({\text m}_1,{\text c}_1)\subset f\}\big|=q-2$.
\end{itemize}

We can conclude that in case of ${\text m}_1={\text c}_1$ the a-posteriori probability is  
$\nu ({\text m}_1,{\text c}_1)=\frac{q-1}{q^2-q-1}$ and in the case of ${\text m}_1\neq {\text c}_1$ it is $\nu ({\text m}_1,{\text c}_1)=\frac{q-2}{q^2-q-1}$. Our a-priori probability was $\mu ({\text m}_1)=\frac{1}{q+1}$. So with large $q$ we can state that in first approximation $\mu ({\text m}_1)\approx \nu ({\text m}_1,{\text c}_1)$.

Lets proceed with examining the encryption of the message point ${\text m}_2$. Due to the prerequisite that the tree message points ${\text m}_1,{\text m}_2,{\text m}_3$ determine a unique circle ${\text M}$ in the M\"obius plane, we know that ${\text m}_2\neq {\text m}_1$. Hence the a-priori probability equals $\mu({\text m}_2)=\frac{1}{q}$. According to condition (MC2) for the encryption functions of the M\"obius  cipher in definition~\ref{dfn:moebius_cipher} the set $K_2$ of possible keys for the encryption of ${\text m}_2$ is reduced by those key points that would map ${\text m}_2$ to ${\text m}_1$ or ${\text c}_1$. Thus in case of ${\text m}_1={\text c}_1$ we get $K_2:=K_1\setminus \{({\text m}_1,{\text m}_2,\infty)^\chi \}$ and $|K_2|=q^2-2q+1$. In  case of ${\text m}_1\neq {\text c}_1$ we get $K_2:=K_1\setminus \{({\text m}_2,{\text m}_1,\infty)^\chi ,({\text m}_2,{\text c}_1,\infty)^\chi \}$ and $|K_2|=q^2-3q+3$.

To calculate the a-posteriori probabilities of the encryption of ${\text m}_2$, we have to distinguish the cases ${\text m}_2= {\text c}_2$ and ${\text m}_2\neq {\text c}_2$ again:

\begin{itemize}
	\item [${\text m}_1= {\text c}_1$,${\text m}_2= {\text c}_2$]Every point of the tangent through the point ${\text c}_2$ at the circle $M$, except the points ${\text c}_2$ and $\infty$ are possible key points. Hence $\big| \{f\in F: ({\text m}_2,{\text c}_2)\subset f\}\big|=q-1$.
	\item [${\text m}_1= {\text c}_1$,${\text m}_2\neq {\text c}_2$] For a given pair of message and ciphertext points ${\text m}_2,{\text c}_2$ all points of the line through the points ${\text m}_2,{\text c}_2$ with the exception of ${\text m}_2,{\text c}_2,\infty$ can be keys. Thus $\big| \{f\in F: ({\text m}_2,{\text c}_2)\subset f\}\big|=q-2$.
	\item[${\text m}_1\neq {\text c}_1$,${\text m}_2= {\text c}_2$]Every point of the tangent through the point ${\text c}_2$ at the circle $M$, except the points ${\text c}_2$ and $\infty$ are possible key points. Hence $\big| \{f\in F: ({\text m}_2,{\text c}_2)\subset f\}\big|=q-1$.
	\item [${\text m}_1\neq {\text c}_1$,${\text m}_2\neq {\text c}_2$] For a given pair of message and ciphertext points ${\text m}_2,{\text c}_2$ all points of the line through the points ${\text m}_2,{\text c}_2$ with the exception of ${\text m}_2,{\text c}_2,\infty$ can be keys. Thus $\big| \{f\in F: ({\text m}_2,{\text c}_2)\subset f\}\big|=q-2$.
\end{itemize}

The following table summarizes the a-priori and a-posteriori probabilities for the encryption of the second message point:

\begin{center}
\begin{tabular}{|c|c|c|}
	\hline 
	Case & A-priori probability & A-posteriori probability \\
	\hline
	${\text m}_1= {\text c}_1$,${\text m}_2= {\text c}_2$ & $\mu({\text m}_2)=\frac{1}{q}$ & $\nu ({\text m}_2,{\text c}_2)=\frac{1}{q-1}$ \\
	\hline
	${\text m}_1= {\text c}_1$,${\text m}_2\neq {\text c}_2$ & $\mu({\text m}_2)=\frac{1}{q}$ & $\nu ({\text m}_2,{\text c}_2)=\frac{q-2}{q^2-2q+1}$ \\
	\hline
	${\text m}_1\neq {\text c}_1$,${\text m}_2= {\text c}_2$ & $\mu({\text m}_2)=\frac{1}{q}$ & $\nu ({\text m}_2,{\text c}_2)=\frac{q-1}{q^2-3q+3}$ \\
\hline
	${\text m}_1\neq {\text c}_1$,${\text m}_2\neq {\text c}_2$ & $\mu({\text m}_2)=\frac{1}{q}$ & $\nu ({\text m}_2,{\text c}_2)=\frac{q-2}{q^2-3q+3}$ \\
\hline
\end{tabular}
\end{center}

Also the second encryption step of our M\"obius cipher is in first approximation perfect, when considering large values for $q$.

Similar considerations lead to the following results for the encryption of the third message point ${\text m}_3$:

\begin{center}
	\begin{tabular}{|c|c|c|}
		\hline 
		Case & A-priori probability & A-posteriori probability \\
		\hline
		${\text m}_1= {\text c}_1$,${\text m}_2= {\text c}_2$,${\text m}_3= {\text c}_3$ & $\mu({\text m}_3)=\frac{1}{q-1}$ & $\nu ({\text m}_3,{\text c}_3)=\frac{q-1}{q^2-3q+3}$ \\
		\hline
		${\text m}_1= {\text c}_1$,${\text m}_2= {\text c}_2$,${\text m}_3\neq {\text c}_3$ & $\mu({\text m}_3)=\frac{1}{q-1}$ & $\nu ({\text m}_3,{\text c}_3)=\frac{q-2}{q^2-3q+3}$ \\
		\hline
		${\text m}_1= {\text c}_1$,${\text m}_2\neq {\text c}_2$,${\text m}_3= {\text c}_3$ & $\mu({\text m}_3)=\frac{1}{q-1}$ & $\nu ({\text m}_3,{\text c}_3)=\frac{q-1}{q^2-4q+5}$ \\
\hline
${\text m}_1= {\text c}_1$,${\text m}_2\neq {\text c}_2$,${\text m}_3\neq {\text c}_3$ & $\mu({\text m}_3)=\frac{1}{q-1}$ & $\nu ({\text m}_3,{\text c}_3)=\frac{q-2}{q^2-4q+5}$ \\
\hline
		${\text m}_1\neq {\text c}_1$,${\text m}_2= {\text c}_2$,${\text m}_3= {\text c}_3$ & $\mu({\text m}_3)=\frac{1}{q-1}$ & $\nu ({\text m}_3,{\text c}_3)=\frac{q-1}{q^2-4q+5}$ \\
\hline
${\text m}_1\neq {\text c}_1$,${\text m}_2= {\text c}_2$,${\text m}_3\neq {\text c}_3$ & $\mu({\text m}_3)=\frac{1}{q-1}$ & $\nu ({\text m}_3,{\text c}_3)=\frac{q-2}{q^2-4q+5}$ \\
\hline
		${\text m}_1\neq {\text c}_1$,${\text m}_2\neq {\text c}_2$,${\text m}_3= {\text c}_3$ & $\mu({\text m}_3)=\frac{1}{q-1}$ & $\nu ({\text m}_3,{\text c}_3)=\frac{q-1}{q^2-5q+7}$ \\
\hline
${\text m}_1\neq {\text c}_1$,${\text m}_2\neq {\text c}_2$,${\text m}_3\neq {\text c}_3$ & $\mu({\text m}_3)=\frac{1}{q-1}$ & $\nu ({\text m}_3,{\text c}_3)=\frac{q-2}{q^2-5q+7}$ \\
\hline
	\end{tabular}
\end{center}

So, also the third encryption step of the M\"obius cipher is in first approximation perfect for large values for $q$.

To show the completeness of the M\"obius cipher in the sense of definition~\ref{dfn:complete_cryptographic_scheme}, we will look at the special case of a number field $\mathbf{F}$ with $\charact\mathbf{F}=2$. The points ${\text p}\in \overset{\circ}{\boldsymbol{S}}$ can then be described in the form ${\text p}(x,y)$ and $x\in \Z_2^n$, $y\in \Z_2^n$. A message of $2n$ bits length can be understood as the point ${\text p}(x,y)$ with coordinates $x$ and $y$. The $i$-th bit of the message shall be the $i$-th position in the representation of the point ${\text p}(x,y)$, which would be the $i$-th component of the vector $x$ for $i=1,...,n$ and the $(i-n)$-th component of the vector $y$ for $i=n+1,...,2n$. 

Let ${\text m}$ be a message point and ${\text c}$ be the corresponding ciphertext point when using the key point ${\text k}$. The encryption is complete in the sense of definition~\ref{dfn:complete_cryptographic_scheme}, if there is a representation of ${\text m}$ where a change in position $j$ of ${\text c}$ is caused by a change in position $i$ of ${\text m}$, $i,j\in \{1,2,...,2n\}$. 

Two cases have to be considered. The indices $i$ and $j$ may affect the same coordinate of both ${\text m}$ and ${\text c}$. Without restricting generality we assume the $x$-coordinate. Alternatively, the first index $i$ may affect one coordinate, say the $x$-coordinate of ${\text m}$, the second index $j$ may then affect the $y$-coordinate of ${\text c}$. 

We have a look at the second case first. Let ${\text m}(x,y)$ and ${\text c}(u,v)$ be the message and ciphertext points and their coordinates. Furthermore, let $e_i$, $i\in \{1,...,n\}$ be the unit vectors in $\Z_2^n$. We transform the index $j$ to $j\rightarrow j-n$. Then ${\text m}'(x+e_i,y)$ and ${\text c}'(u,v+e_j)$ are the message and ciphertext points with changes in positions $i$ and $j$. It has to be shown that there are two lines ${\text G}$ and ${\text H}$ in the M\"obius plane $(\boldsymbol{S},\boldsymbol{X})$ over the field $\Z_2^n$ with ${\text m},{\text c},{\text k}\in {\text G}$ and ${\text m}',{\text c}',{\text k}\in {\text H}$. Without restricting generality we assume that ${\text k}=(0,0)$, otherwise the M\"obius plane can be transformed accordingly. Then our assertion is equal to 

\begin{equation}\label{assertion_completeness}
\begin{split}
	\determ({\text m},{\text c})&=0 \\
	\determ({\text m}',{\text c}')&=0
\end{split}
\end{equation}

Substituting the coordinates for ${\text m},{\text m}',{\text c},{\text c}'$ results in:

\begin{equation}\label{assertion_completeness_with_coordinates}
\begin{split}
xv-uy&=0 \\
xv+xe_j+ve_i+e_ie_j-uv&=0
\end{split}
\end{equation}

Or:

\begin{equation}
\begin{split}
xv-uy&=0 \\
xe_j+ve_i+e_ie_j&=0
\end{split}
\end{equation}

It is easy to provide coordinates for ${\text m}(x,y)$ and ${\text c}(u,v)$ that satisfy these equations.

To show the first case we again assume that the two points ${\text m}(x,y)$ and ${\text c}(u,v)$ and their coordinates as given. Let now be ${\text m}'(x+e_i,y)$ and ${\text c}'(u+e_j,v)$ be the altered points. Once again it has to be shown that condition (\ref{assertion_completeness}) is fulfilled. In analogy to the procedure shown above we reach following conditions:

\begin{equation}
\begin{split}
xv-uy&=0 \\
xv+e_iv-yu-ye_i&=0
\end{split}
\end{equation}

Which can be simplified to:

\begin{equation}
\begin{split}
xv-uy&=0 \\
e_jy+ve_i&=0
\end{split}
\end{equation}

Again it is easy to find suitable points ${\text m}$ and ${\text c}$ for all possible $i,j$.

\section{R\'{e}sum\'{e} and Outlook}
We showed that geometric structures cannot only be used to define authentication schemes but that they can also be used to design encryption functions. A cryptographic transformation has been introduced in a M\"obius plane over a finite field that is approximately perfect in the sense of Shannon \cite{Sha49} and complete according to \cite{KD79}. We think that the M\"obius cipher can be of good practical use in the area of quantum cryptography, where a constant stream of qbits is shared as key between two parties to support encryption. In comparison to the well known Vernam cipher~\cite{Ve26}, the M\"obius cipher has the advantage of not only being approximately perfect according to Shannon, but also complete according to Kam and Davida. The results of this paper can easily generalized for geometric structures of the Laguerre and Minkowski plane.

\end{document}